\newtheorem{thm}{Theorem}[section]
\newtheorem{lem}[thm]{Lemma}
\newtheorem{definition}[thm]{Definition}
\newtheorem{cor}[thm]{Corollary}
\newtheorem{prop}[thm]{Proposition}
\newtheorem{rem}[thm]{Remark}
\numberwithin{equation}{section}
\title{Optimal Fast Johnson-Lindenstrauss Embeddings for Large Data Sets}
\newcommand{\footremember}[2]{
	\footnote{#2}
	\newcounter{#1}
	\setcounter{#1}{\value{footnote}}
}
\author{
	Stefan Bamberger\footremember{tum}{stefan.bamberger@tum.de, Department of Mathematics, Technische Universität München}
	\and Felix Krahmer\footremember{tum2}{felix.krahmer@tum.de, Department of Mathematics, Technische Universität München}
}
\date{\today}
\DeclareMathOperator{\Var}{Var}
\DeclareMathOperator{\Cov}{Cov}
\begin{document}
\pagenumbering{gobble}
\thispagestyle{empty}
	
\maketitle
\begin{abstract}
	Johnson-Lindenstrauss embeddings are widely used to reduce the dimension and thus the processing time of data. To reduce the total complexity, also fast algorithms for applying these embeddings are necessary. To date, such fast algorithms are only available either for a non-optimal embedding dimension or up to a certain threshold on the number of data points.

	We address a variant of this problem where one aims to simultaneously embed larger subsets of the data set. Our method follows an approach by Nelson: A subsampled Hadamard transform maps points into a space of lower, but not optimal dimension. Subsequently, a random matrix with independent entries projects to an optimal embedding dimension.
	
	For subsets whose size scales at least polynomially in the ambient dimension, the complexity of this method comes close to the number of operations just to read the data under mild assumptions on the size of the data set that are considerably less restrictive than in previous works. We also prove a lower bound showing that subsampled Hadamard matrices alone cannot reach an optimal embedding dimension. Hence, the second embedding cannot be omitted.
\end{abstract}
	
	

\section{Introduction}
\subsection{Johnson-Lindenstrauss Embeddings and Applications}

Dimension reduction has played an increasingly significant role in data science in recent years due to the increasing size and dimensionality of available data. So-called Johnson-Lindenstrauss embeddings (JLEs) are of central importance in this context. Such maps reduce the dimension of an arbitrary finite set of points while preserving the pairwise distances up to a small deviation. JLEs were first introduced in \cite{jloriginal} in the context of Banach spaces. They continue to be intensively studied in both mathematics and computer science.

For reasons of practicability, most works focus on linear maps based on random constructions, aiming to be independent of the data to be considered. In this spirit, we work with the following definition of a Johnson-Lindenstrauss embedding ($E$ can be thought of as the set of pairwise distances).

\begin{definition}[Johnson-Lindenstrauss Embedding] \label{def_jle}
	Let $A \in \mathbb{R}^{m \times N}$ be a random matrix where $m < N$, $\epsilon, \eta \in (0, 1)$ and $p \in \mathbb{N}$. We say that $A$ is a $(p, \epsilon, \eta)$-JLE (Johnson-Lindenstrauss embedding) if for any subset $E \subseteq \mathbb{R}^{N}$ with $|E| = p$
	\begin{equation}
	(1 - \epsilon) \|x\|_2^{2} \leq \|A x\|_2^{2} \leq (1 + \epsilon) \|x\|_2^{2}  \label{jlp_ineq}
	\end{equation}
	holds simultaneously for all $x \in E$  with a probability of at least $1 - \eta$.
\end{definition}

The original work of Johnson and Lindenstrauss constructed $(p, \epsilon, \eta)$-JLEs with an embedding dimension of $m \geq C_\eta \epsilon^{-2} \log(p)$, which has recently been shown to be order optimal under minimal assumptions \cite{jl_optimal}. Various subsequent works developed simplified approaches for constructing JLEs. Notably, Achlioptas \cite{binaryjl} considered a  matrix $A \in \mathbb{R}^{m \times N}$ with independent entries $A_{j k} \in \{\pm 1\}$ satisfying $\mathbb{P}(A_{j k} = 1) = \mathbb{P}(A_{j k} = -1) = \frac{1}{2}$. In this case the normalized matrix $\frac{1}{\sqrt{m}} A$ is a $(p, \epsilon, \eta)$-JLE in the above setting, again for order optimal embedding dimension. Dasgupta and Gupta \cite{gupta_dasgupta_gauss} showed the same for a matrix with independent normally distributed (Gaussian) entries.

In many applications of Johnson-Lindenstrauss embeddings, even such simplified constructions are of limited use due to the trade-off between the complexity benefit for the original problem resulting from the dimension reduction and the additional computations required to apply the JLE.

Ailon and Chazelle \cite{fastjl} addressed this issue in connection to the approximate nearest neighbor search problem (ANN), the problem of finding a point $\hat{x}$ in a given finite set $E \subseteq \mathbb{R}^N$ such that one has $\|x - \hat{x}\| \leq (1 + \epsilon) \min_{v \in E} {\|x - v\|}$ for a given $x \in \mathbb{R}^N$.

Their algorithm uses a preprocessing step that transforms all $p$ points in $E$ using a JLE. This step is known to have a high computational complexity, but as it can be performed offline, it is not considered to be the main computational bottleneck.
Rather the focus of the analysis has been on  the subsequent query steps, in which the JLE is applied to new inputs $x$, reducing the dimension for the subsequent computations. For this purpose, the authors design the so-called Fast Johnson-Lindenstrauss Transform (FJLT) whose application to $x$ requires a particularly low computation time. 
It should be noted, however, that if only one or very few query steps are to be performed, the problem is likely to be feasible even without the FJLT -- a challenging scenario of interest will be to embed a larger number of points. In the query step, these are typically significantly fewer than in the full data set; at the same time, for the preprocessing step, also a fast embedding of the entire set may be of interest.

In other applications, what would correspond to the preprocessing step in \cite{fastjl} is the central task of interest. That is, the data set to be embedded is given in full and the goal is to efficiently compute the dimension reduction of the whole set at once.

For example, such a setup appears in various approaches for nonlinear dimension reduction such as Isomap \cite{isomap}, Locally Linear Embedding \cite{lle} or Laplacian eigenmaps \cite{lap_eig}. These methods exploit that a high-dimensional data set lying near a low-dimensional manifold can be locally approximated by linear spaces. Hence it is of key importance to identify the data points close to a point of interest in the data set, as the linear approximation will be valid only for those. Consequently, one only works with points in the data set, no new points enter at a later stage, and it suffices to apply the JLE simultaneously to the whole data set before searching for multiple approximate nearest neighbors within its projection.

A simultaneous fast transformation of the entire data set is also a central goal for various applications in randomized numerical linear algebra. The following lemma, for example, introduces a randomized approach for approximate matrix multiplication by multiplying matrices $\hat{A} = (S A^*)^*$ and $\hat{B} = S B$ shrunk by a Johnson-Lindenstrauss embedding $S$ rather than computing the product of the full matrices.
\begin{lem}[Lemma 6 in \cite{mat_alg}] \label{lem_randmat}
	Let $A \in \mathbb{R}^{q \times N}, B \in \mathbb{R}^{N \times p}$ be matrices and $S \in \mathbb{R}^{m \times N}$ a $(q + p, \epsilon, \eta)$-JLE. Then
	\[
	\mathbb{P}(\|AB - AS^*SB\|_F \leq \epsilon \|A\|_F \|B\|_F) \geq 1 - \eta.
	\]
\end{lem}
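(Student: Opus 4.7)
The plan is to reduce the Frobenius-norm bound to an entrywise inner-product preservation estimate and reassemble via the Cauchy-Schwarz inequality. Writing $a_i \in \mathbb{R}^N$ for the $i$-th row of $A$ and $b_j \in \mathbb{R}^N$ for the $j$-th column of $B$, each entry of the error matrix is
\[
(AB - AS^*SB)_{ij} = \langle a_i, b_j\rangle - \langle Sa_i, Sb_j\rangle,
\]
so the squared Frobenius norm becomes a double sum of squared inner-product errors. Once I establish a uniform per-pair estimate $|\langle Sa_i, Sb_j\rangle - \langle a_i, b_j\rangle| \leq C\epsilon \|a_i\|\|b_j\|$, summing over $i,j$ and using $\sum_i \|a_i\|^2 = \|A\|_F^2$ and $\sum_j \|b_j\|^2 = \|B\|_F^2$ immediately yields $\|AB - AS^*SB\|_F^2 \leq C^2 \epsilon^2 \|A\|_F^2 \|B\|_F^2$, and the lemma follows after absorbing the constant $C$ into $\epsilon$.

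For the per-pair bound, the natural move is to apply the $(q+p, \epsilon, \eta)$-JLE hypothesis to a single set $E$ of cardinality $q+p$ built from the rows of $A$ and the columns of $B$ (suitably normalized). On the high-probability event that the JLE inequality holds on $E$, the norms $\|Sa_i\|$ and $\|Sb_j\|$ are preserved to within a factor $(1\pm\epsilon)$ of their true values. To upgrade these norm bounds into an inner-product estimate, I would use a polarization identity such as $2\langle u,v\rangle = \|u\|^2 + \|v\|^2 - \|u-v\|^2$ together with its counterpart for $\langle Su, Sv\rangle$, combining the JLE inequalities per pair $(i,j)$.

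The step that requires the most care, and which I expect to be the main obstacle, is the polarization, because vectors of the form $a_i \pm b_j$ naively produce up to $qp$ additional entries in the set on which the JLE must hold. I see two natural routes around this. One is to fold these combinations back into a set of size $q+p$ by clever grouping or renormalization so that the single JLE event still controls them. The other is to bypass polarization entirely and work at the second-moment level: since typical JLE constructions yield control on the variance of $\langle Sa_i, Sb_j\rangle$ around $\langle a_i, b_j\rangle$, a direct computation gives $\mathbb{E}\|AB - AS^*SB\|_F^2 \leq \epsilon^2 \|A\|_F^2 \|B\|_F^2$, and Markov's inequality (with the failure probability absorbed into $\eta$) delivers the stated high-probability statement. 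Either route closes the proof.
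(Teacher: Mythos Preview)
The paper does not prove this lemma; it is quoted from \cite{mat_alg} as background, so there is no in-paper argument to compare against. Your overall plan---reduce to per-pair inner-product preservation via polarization and then sum using $\sum_i\|a_i\|_2^2=\|A\|_F^2$, $\sum_j\|b_j\|_2^2=\|B\|_F^2$---is the standard route and is correct once the per-pair bound is in hand.

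The obstacle you flag is real, and neither of your two proposed resolutions removes it. Route~1 cannot work: norm preservation on $\{\hat a_i\}\cup\{\hat b_j\}$ alone places no constraint on the cross inner products (with $a_1=e_1$, $b_1=e_2$ and the realization $S=(1\ 1)\in\mathbb{R}^{1\times 2}$ one has $\|Se_k\|_2=1$ exactly for $k=1,2$, yet $\langle Se_1,Se_2\rangle=1\neq 0=\langle e_1,e_2\rangle$), and no $(q+p)$-element set has a JL event that pins down all $qp$ cross terms. Route~2 invokes a variance identity that holds for specific constructions (Gaussian, Rademacher, \dots) but is not a consequence of the abstract $(q+p,\epsilon,\eta)$-JLE property in the paper's Definition~1.1, so it proves a different statement than the one asserted.

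What actually works is to apply the JLE hypothesis to the set $E=\{\hat a_i\}\cup\{\hat b_j\}\cup\{\hat a_i+\hat b_j\}$ of size $q+p+qp\le (q+p)^2$; polarization then gives $|\langle S\hat a_i,S\hat b_j\rangle-\langle\hat a_i,\hat b_j\rangle|\le 3\epsilon$ for every pair and the Frobenius bound follows. Because $\log\bigl((q+p)^2\bigr)=2\log(q+p)$, this costs only a constant in the embedding dimension for all constructions discussed in the paper---which is presumably why the cited source writes the parameter as $q+p$---but as literally stated the argument needs a $((q+p)^2,\epsilon,\eta)$-JLE rather than a $(q+p,\epsilon,\eta)$-JLE.
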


Again, the JLE is applied to the whole matrices $A^*$ and $B$ and no additional data points enter at a later stage.

The computation time required for this approximate product consists, on the one hand, of the time $\mathcal{O}(q m p)$ for the reduced product and, on the other hand, of the computational effort to calculate $S B$ and $S A^*$, i.e.~to apply the JLE to the entire data set (the columns of $B$ and the rows of $A$). 

If the JLE used does not admit a fast transform and standard matrix multiplication is applied,
its computational cost can easily dominate the computation time for the whole approximate product.
Namely, assuming w.l.o.g.~that $q \leq p$, applying $S$ to the data set requires $\mathcal{O}(m N q + m N p) = \mathcal{O}(m N p)$ operations. If $q \leq N$, this becomes the dominant part of the computation, if $q \leq m$, it even surpasses the complexity of the original multiplication $\mathcal{O}(q N p)$.

With a fast Johnson-Lindenstrauss transform, this changes and the cost of the multiplication in the embedding space will typically become dominant. Namely, if applying the transform requires say $\mathcal{O}(N \log N)$ operations per data point, the total cost for the JLE step is $\mathcal{O}(p N \log N)$, which will be less than $\mathcal{O}(m N p)$ in basically all interesting cases.  Thus minimizing the total computational cost boils down to choosing the embedding dimension $m$ as small as possible.

Given that the Fast Johnson-Lindenstrauss Transform by Ailon and Chazelle \cite{fastjl} and its extension by Ailon and Liberty \cite{fastjl2} yield order-optimal embedding dimensions, they provide near-optimal solutions to these problems whenever the regime is admissible. 
However, these works only apply to data sets of size $p \leq \exp(\mathcal{O}_\epsilon(N^{\frac{1}{3}}))$ or $p \leq \exp(\mathcal{O}_\epsilon(N^{\frac{1}{2}-\delta}))$ for some arbitrary $\delta>0$ (that will impact the constants appearing in the embedding dimension), respectively. Here the notation $\mathcal{O}_\epsilon$ indicates that the involved constants may depend on $\epsilon$. Follow-up works without such restrictions (e.g., \cite{alm_opt_fjlt}, \cite{riptojl}) require additional logarithmic factors in the embedding dimension, which can directly impact the total computational cost as, e.g., in the second example discussed. 

To our knowledge, there were no approaches known before this work of applying Johnson-Lindenstrauss embeddings with an optimal embedding dimension to data sets of size larger than $\exp(\mathcal{O}_\epsilon(N^{\frac{1}{2}}))$ which admit a fast transform, neither for an individual query step nor for transforming larger subsets or the full data set simultaneously.

\subsection{Contributions of this Work}

The goal of this note is to provide such methods for the case of simultaneous embeddings of larger subsets of the data sets or the entire set.
To do this, we consider a class of matrices based on a composition of a dense matrix with random signs and a randomly subsampled Hadamard transform. Such embeddings have been considered in the literature before, for example in \cite{nelson_rip_fast}.

Our contribution is the following. Assuming only that  $p = \exp(\mathcal{O}_\epsilon(N^{1 - \delta}))$ for some $\delta > 0$, we provide a method for simultaneously applying this map to $p'$ points in time $\mathcal{O}(p' N \log m)$ where $p' \geq N^{d(\delta)}$ and the exponent $d(\delta)$ depends only on $\delta$. Up to the logarithmic factor, this complexity is just the number of operations required to read the data. The assumption on $p$ is indeed mild; it admits data sets of size close to $p=\exp(\Theta_\epsilon(N))$, for which the identity provides a trivial order optimal embedding, so our result almost unrestricted for large $p$.

Note that this statement includes a fast transformation of the entire data set $E$, but it is more general. In particular, with explicit bounds for $d(\delta)$ for certain choices of $\delta$, we obtain fast simultaneous embeddings of as few as $N$ points and data sets of admissible size significantly beyond state-of-the-art results.

With slight modifications this can also be made to work for embeddings into a space equipped with the $\ell_1$ instead of the $\ell_2$ norm. We also provide a lower bound on the possible embedding dimension of subsampled Hadamard transforms without the composition with the dense matrix. This shows that they cannot provide an optimal embedding dimension as desired in this paper.

\subsection{Outline}

We start by reviewing the construction of the Fast Johnson-Lindenstrauss Transform in Section \ref{subsec_fjlt}. Section \ref{subsec_jl_rip} discusses previous work on alternative fast Johnson-Lindenstrauss embeddings with no restriction on $p$, but an embedding dimension that is suboptimal by logarithmic factors.
Section \ref{sec_comp} then presents the composed embedding used for the analysis and our main contribution, a method for fast (simultaneous) transformations of large data sets without an essential restriction on $p$.
The proof of our main result regarding the complexity is presented in Section \ref{sec:complexity}. In the subsequent Section \ref{sec:ell1} we show that this method can be adapted for fast embeddings into spaces with the $\ell_1$ norm. In Section \ref{sec:lower_bounds} we show the lower bound on the embedding dimension for subsampled Hadamard matrices. Then we conclude with a discussion in Section~\ref{sec:discussion}.

\section{Background and Previous Work} \label{sec_existingjl}

\subsection{Notation and Basic Terms} \label{sec:notation_basic}

We make use of the standard $\mathcal{O}$ notation, i.e.,~$f(x) = \mathcal{O}(g(x))$ if $f(x) \leq C g(x)$ for a constant $C > 0$ and all $x > x_0$ or $0 < x < x_0$ for fixed $x_0$ depending on whether the limit $x \rightarrow \infty$ or $x \rightarrow 0$ is considered, $f(x) = \Omega(g(x))$ analogously for $f(x) \geq C g(x)$ and $f(x) = \Theta(g(x))$ if both $f(x) = \mathcal{O}(g(x))$ and $f(x) = \Omega(g(x))$ at the same time. We write $\mathcal{O}_c$ for a parameter $c$ if the corresponding $x_0$ and $C$ can depend on $c$, analogously for $\Omega_c$ and $\Theta_c$. Sometimes we also write $f(x) \lesssim g(x)$ instead of $f(x) = \mathcal{O}(g(x))$ and $f(x) \asymp g(x)$ for $f(x) = \Theta(g(x))$.

We write $f(x) = o(g(x))$ for $g(x) > 0$ if $\lim \frac{f(x)}{g(x)} = 0$, with the limit depending on the context.

We use the notion of subgaussian random variables $X$ and two equivalent characterizations:
\begin{itemize}
	\item $\mathbb{P}(|X| > t) \leq \exp(1 - t^2 / K_1^2)$
	\item $\sup_{p \geq 1} \mathbb{E}[|X|^p] \leq K_2$
\end{itemize}
For the proof of equivalence and more characterizations, see \cite{vershynin_rand_mat} for a summary. Every subgaussian random variable satisfies both conditions with $K_1$ and $K_2$ that differ from each other by at most a constant factor. We define the corresponding $K_1$ to be the subgaussian norm $\| X \|_{\psi_2}$.

We call a vector $\xi \in \{\pm 1\}^N$ a Rademacher vector if it has independent entries and each of them takes the value $\pm 1$ with a probability of $\frac{1}{2}$ each.

We use $H \in \mathbb{R}^{N \times N}$ with $N = 2^n$ a power of $2$ for the Hadamard transformation on $\mathbb{R}^N$. This is defined as the $n$-times Kronecker Product of $H_1 := \frac{1}{\sqrt{2}} \begin{pmatrix} 1 & 1 \\ 1 & -1 \end{pmatrix}$ with itself, i.e.~$H = \bigotimes_{j = 1}^n H_1$. $H_1$ is the Fourier transform on the group $Z_2$, thus we can also regard $H$ as the Fourier transform on the group $Z_2^n$ by fixing a bjiection between $Z_2^n$ (the same as $\mathbb{F}_2^n$) and $[N]$. Then for $u, v \in \mathbb{F}_2^n$, the corresponding entry of the matrix is also given by $H_{u v} = (-1)^{\langle u, v \rangle}$ where $\langle \cdot \rangle$ denotes the inner product in $\mathbb{F}_2^n$.

The matrix $H$ is unitary and all its entries have an absolute value of $\frac{1}{\sqrt{N}}$. Additionally, there is a fast algorithm (Walsh-Hadamard transform) which can compute the matrix vector product $H x$ for any $x \in \mathbb{R}^N$ in time $\mathcal{O}(N \log N)$. For details about the algorithm, see Section 7 of \cite{fastjl2}.

There are also other matrices in complex numbers which have these three properties such as the discrete Fourier transform. The main results of this paper can also be adapted to these matrices. However, we are going to use the Hadamard transform since it ensures that our embedding stays real valued.

For counterexamples we are also going to use specific structural properties of the Hadamard matrix. Denote $\mathbb{G}_{n, r}$ for the set of all $r$-dimensional subspaces of the vector space $\mathbb{F}_2^n$. For a subset $V \subseteq \mathbb{F}_2^n$ we denote $\mathbbm{1}_V \in \mathbb{R}^{N}$ for the indicator vector corresponding to $V$ with the normalization $\| \mathbbm{1}_V \|_2 = 1$. We use the following fact about subspaces: If $V \in \mathbb{G}_{n, r}$, then 
\begin{equation}
H \mathbbm{1}_V = \mathbbm{1}_{V^\perp} \label{hadamard_subspace}
\end{equation}
(see for example \cite{lower_bound_ieee}, Lemma II.1).

\subsection{Optimal Fast Johnson-Lindenstrauss Embeddings} \label{subsec_fjlt}

The Fast Johnson-Lindenstrauss Transformation (FJLT) by Ailon and Chazelle \cite{fastjl} consists of a fast Hadamard transform combined with a sparse projection. More precisely, the embedding $A = P H D_\xi \in \mathbb{R}^{m \times N}$ is defined as follows:
\begin{itemize}
	\item $P \in \mathbb{R}^{m \times N}$ is a sparse projection with independent entries $P_{j k} = b_{j k} g_{j k}$ where all $b_{j k}$ and $g_{j k}$ are independent and $\mathbb{P}(b_{j k} = 1) = q$, $\mathbb{P}(b_{j k} = 0) = 1 - q$ and $g_{j k} \sim N(0, (m q)^{-1})$ for a $q = \min\left\{ \Theta\left( \frac{(\log p)^2}{N} \right), 1 \right\}$.
	
	\item $H \in \mathbb{R}^{N \times N}$ is a full Hadamard matrix.
	
	\item $D_\xi \in \mathbb{R}^{N \times N}$ is a diagonal matrix corresponding to a Rademacher vector $\xi \in \{1, -1\}^N$.
\end{itemize}

Lemma 1 in \cite{fastjl} states that this random matrix is a $(p, \epsilon, \frac{2}{3})$-JLE for $m = \Theta(\epsilon^{-2} \log p)$. Since the transformation by $H$ can be computed in time $\mathcal{O}(N \log N)$ (using the Walsh-Hadamard transform described e.g.~in \cite{fastjl2}) and $P$ is sparse with a high probability, it is also shown that the entire transformation needs
\begin{equation}
\mathcal{O}\left(N \log N + \min\{N \epsilon^{-2} \log p, \epsilon^{-2} (\log p)^3 \} \right) \label{time_fastjl}
\end{equation}
operations with a probability of at least $\frac{2}{3}$. This is of order $\mathcal{O}(N \log N)$ provided $\log p = \mathcal{O}_\epsilon(N^\frac{1}{3})$. Furthermore, the embedding dimension exhibits optimal scaling in $N$, $p$ and $\epsilon$ \cite{jl_optimal}.

A further improvement of this approach is achieved by Ailon and Liberty in \cite{fastjl2} where the bound for the fast transformation is raised from $\log p = \mathcal{O}_\epsilon(N^{\frac{1}{3}})$ to $\log p = \mathcal{O}_\epsilon(N^{\frac{1}{2} - \delta})$ for any fixed $\delta > 0$. The computation time is also lowered to $\mathcal{O}(N \log m)$. 

\subsection{Unrestricted Fast Johnson-Lindenstrauss Embeddings} \label{subsec_jl_rip}

A different approach for the construction of Johnson-Lindenstrauss matrices was introduced by Ailon and Liberty in \cite{alm_opt_fjlt}. Compared to the aforementioned result, this construction does not have a significant restriction on the number $p$ of points in $E$ for a fast transformation. However, its embedding dimension has an additional polylogarithmic factor in $N$ as well as a suboptimal dependence on $\epsilon$. The dependence on $\epsilon$ is improved in \cite{riptojl}, making the construction optimal up to logarithmic factors in $N$.

Both constructions are based on the restricted isometry property (RIP) of the embedding matrix, that is, approximate norm preservation of vectors with many vanishing entries.

\begin{definition}[Restricted Isometry Property (RIP)]
	A matrix $\Phi \in \mathbb{R}^{m \times N}$ has the $(k, \delta)$-RIP (restricted isometry property) if one has
	\[
	(1 - \delta) \|x\|_2^2 \leq \|\Phi x\|_2^2 \leq (1 + \delta) \|x\|_2^2
	\]
	for all $k$-sparse $x \in \mathbb{R}^N$, i.e.~for all $x \in \mathbb{R}^N$ that have at most $k$ non-zero entries.
\end{definition}

The following theorem of \cite{riptojl} shows that RIP matrices can be made to Johnson-Lindenstrauss transforms by randomizing their column signs.
\begin{thm}[Theorem 3.1 from \cite{riptojl}] \label{jl_rip_thm}
	Let $\epsilon, \eta \in (0, 1)$ and $E \subseteq \mathbb{R}^N$ with $|E| = p < \infty$. Assume that $\Phi \in \mathbb{R}^{m \times N}$ has the $(k, \delta)$-RIP for $k \geq 40 \log \frac{4p}{\eta}$ and $\delta \leq \frac{\epsilon}{4}$. Let $\xi \in \{1, -1\}^N$ be a Rademacher vector.
	
	Then $\Phi D_\xi$ is a $(p, \epsilon, \eta)$-Johnson-Lindenstrauss embedding.
\end{thm}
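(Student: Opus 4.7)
My plan is to fix a single point $x \in E$ and show that $\bigl|\|\Phi D_\xi x\|_2^2 - \|x\|_2^2\bigr| \le \epsilon \|x\|_2^2$ holds with probability at least $1 - \eta/p$; a union bound over the $p$ elements of $E$ will then give the claim. By homogeneity I may assume $\|x\|_2 = 1$ throughout.

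The starting observation is that $\|\Phi D_\xi x\|_2^2$ is a Rademacher quadratic form in $\xi$: with $D_x := \operatorname{diag}(x)$,
\[
\|\Phi D_\xi x\|_2^2 \;=\; \xi^{T} M\, \xi, \qquad M := D_x\, \Phi^{T} \Phi\, D_x.
\]
Using $\xi_i^2 = 1$, I split $M$ into its diagonal and off-diagonal parts $M = \operatorname{diag}(M) + M^{\mathrm{o}}$ and obtain
\[
\|\Phi D_\xi x\|_2^2 \;=\; \operatorname{tr}(M) \;+\; \xi^{T} M^{\mathrm{o}} \xi .
\]
The first term is deterministic, and since the $(k,\delta)$-RIP applied to the $1$-sparse vectors $e_i$ gives $\bigl|\|\Phi e_i\|_2^2 - 1\bigr| \le \delta$, I get $\bigl|\operatorname{tr}(M) - 1\bigr| \le \sum_i x_i^2 \,\delta = \delta \le \epsilon/4$, which consumes at most a quarter of the allowed budget.

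It remains to control the mean-zero Rademacher chaos $\xi^{T} M^{\mathrm{o}} \xi$. I would apply a Bernstein-type deviation inequality for quadratic forms in Rademacher variables of the form
\[
\mathbb{P}\bigl(|\xi^{T} M^{\mathrm{o}} \xi| > t\bigr) \;\le\; 2 \exp\!\Bigl(-c\, \min\bigl(t^2 / \|M^{\mathrm{o}}\|_F^2,\; t / \|M^{\mathrm{o}}\|_{\mathrm{op}}\bigr)\Bigr),
\]
and take $t = 3\epsilon/4$. For the resulting tail to drop below $\eta/p$, I need upper bounds on $\|M^{\mathrm{o}}\|_F$ and $\|M^{\mathrm{o}}\|_{\mathrm{op}}$ derived from the RIP alone. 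This is the main obstacle, because $x$ is generally not sparse, so the RIP cannot be invoked directly on the weighted matrix $D_x \Phi^{T}\Phi D_x$. The standard way around this is a polarization of the RIP, which yields $\bigl|\langle u, (\Phi^{T}\Phi - I) v \rangle\bigr| \le \delta\|u\|_2\|v\|_2$ for every pair of $k$-sparse vectors $u,v$, combined with a block decomposition of the coordinates of $x$ sorted by decreasing magnitude into groups of size $k$. Summing block-pair contributions and exploiting the decay of the $\ell^2$ mass of $x$ across tail blocks should give bounds of the form $\|M^{\mathrm{o}}\|_{\mathrm{op}} \lesssim \delta$ and $\|M^{\mathrm{o}}\|_F \lesssim \delta$; the hypotheses $\delta \le \epsilon/4$ and $k \ge 40 \log(4p/\eta)$ are then precisely tuned so that the tail drops below $\eta/p$.

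Combining the deterministic diagonal estimate with the chaos tail bound gives $\bigl|\|\Phi D_\xi x\|_2^2 - 1\bigr| \le \epsilon$ with probability at least $1 - \eta/p$ for each fixed $x$, and a union bound over the $p$ elements of $E$ completes the proof that $\Phi D_\xi$ is a $(p,\epsilon,\eta)$-JLE.
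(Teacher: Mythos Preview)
The paper does not prove this theorem; it is quoted as Theorem~3.1 of \cite{riptojl} and used as a black box. Your outline is indeed the skeleton of the original Krahmer--Ward argument: write $\|\Phi D_\xi x\|_2^2$ as a Rademacher chaos, split off the diagonal, and control the off-diagonal part via a Bernstein/Hanson--Wright inequality combined with a block decomposition of $x$ into groups of $s\asymp k$ coordinates of decreasing magnitude.

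There is, however, a genuine gap at the decisive step. The bounds you assert, $\|M^{\mathrm o}\|_{\mathrm{op}}\lesssim\delta$ and $\|M^{\mathrm o}\|_F\lesssim\delta$, are true (they follow already from the $2$-sparse RIP estimate $|(\Phi^T\Phi)_{ij}|\le\delta$ together with $\|x\|_2=1$) but they are \emph{not sufficient}. Substituting them into the Hanson--Wright tail with $t=3\epsilon/4$ and $\delta\le\epsilon/4$ gives
\[
\min\!\Bigl(\frac{t^2}{\|M^{\mathrm o}\|_F^2},\ \frac{t}{\|M^{\mathrm o}\|_{\mathrm{op}}}\Bigr)\ \ge\ \min(9,3)\ =\ 3,
\]
a fixed constant. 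The resulting probability bound does not depend on $p$ or $\eta$; the hypothesis $k\ge 40\log(4p/\eta)$ is never invoked, and the union bound over $p$ points cannot succeed. The sparsity level $k$ has to enter the \emph{norm estimates}, not only the assumptions.

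In \cite{riptojl} this is achieved by making the block decomposition do more work than you indicate. The within-block contributions are not left inside the chaos: since each $D_\xi x_{S_j}$ is $s$-sparse with $\|D_\xi x_{S_j}\|_2=\|x_{S_j}\|_2$, the full $s$-sparse RIP controls $\sum_j\|\Phi D_\xi x_{S_j}\|_2^2$ \emph{deterministically} to within $\delta$. Only the cross-block terms remain random, and for those the sorted structure yields $\|x_{S_j}\|_\infty\le s^{-1/2}$ for $j\ge 2$; it is this $\ell_\infty$ smallness---not merely $\ell_2$ decay across blocks---that injects a factor of $s\asymp k$ into the chaos exponent and makes the tail bound match $\eta/p$.
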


Using an argument based on Gelfand widths, one can show that for sufficiently small $\delta$, for a $(k, \delta)$-RIP matrix in $\mathbb{R}^{m \times N}$, one must have $m = \Omega(k \log(\frac{N}{k}))$ (see Chapter 10 in \cite{csbasic}). Thus Theorem \ref{jl_rip_thm} can only yield the Johnson-Lindenstrauss property for an embedding dimension of at least $m = \Omega(\epsilon^{-2} \log(p) \log(\frac{N}{k}))$ while the optimal one is independent of $N$.

Theorem \ref{jl_rip_thm} can be applied to the transform $R H D_\xi$ with a Rademacher vector $\xi \in \mathbb{R}^N$, a Hadamard matrix $H \in \mathbb{R}^{N \times N}$ and a random projection (subsampling) $R \in \mathbb{R}^{m \times N}$ selecting $m$ entries uniformly with replacement and rescaling by $\sqrt\frac{N}{m}$.

As $R H \in \mathbb{R}^{m \times N}$ is shown to have the $(k, \delta)$-RIP with high probability for $m = \Theta(\delta^{-2} k (\log N)^4)$ in \cite{fourierrip_n4}, this construction requires an embedding dimension of $m = \Theta(\epsilon^{-2} \log(\frac{p}{\eta}) (\log N)^4)$ which is optimal up to a factor of $(\log N)^4$. At the same time, the construction admits a fast computation of the matrix vector product even for large values of $p$: Using the fast Walsh-Hadamard transform, one can always achieve a computation time of $\mathcal{O}(N \log m)$ per data point.

Similar results can also be obtained for subsampled random convolutions instead of subsampled Hadamard transform (\cite{prc_rrt}, \cite{prc_suprema_chaos}).

Improved embedding dimensions can be obtained using new RIP bounds for partial Hadamard matrices (more generally, subsampled unitary matrices with entries bounded by $\mathcal{O}(\frac{1}{\sqrt{N}})$) that have been shown in \cite{fourierrip}.
Namely, it is shown that $m = \Omega(\delta^{-2} (\log\frac{1}{\delta})^2 k (\log\frac{k}{\delta})^2 \log(N))$ implies the $(k, \delta)$-RIP with probability arbitrarily close to $1$ for sufficiently large $N$. Thus, for $k \leq N$ and a fixed $\delta$, $m = \Omega(k (\log N)^3)$ is sufficient. So for a fixed $\epsilon$, this implies that the JLE construction introduced above needs an embedding dimension of $m = \Theta(\log(p) (\log N)^3)$, improving the previous result by a factor of $\log N$. However, for reasons of simplicity of presentation due to the simpler dependence on $\epsilon$, we continue to work with the bound $m = \Theta(\epsilon^{-2} \log(\frac{p}{\eta}) (\log N)^4)$ and leave the possible refinement to the reader.

\subsection{Rectangular Matrix Multiplication} \label{sec:rectmm}
We denote by $M_E \in \mathbb{R}^{N \times p}$ the matrix with the vectors in $E$ as columns. Then simultaneously applying a transform $A \in \mathbb{R}^{m \times N}$ to all vectors in $E$  corresponds to computing the matrix product $A M_E$.

Thus the question of computing the simultaneous embedding closely connects to the topic of fast algorithms for matrix multiplication, which, starting with the seminal work of Strassen \cite{strassenalg}, developed into a core area of complexity theory (see, for example, Chapter 15 of \cite{alg_comp_th} for an overview). Given that we are interested in dimension reduction, algorithms for the fast multiplication of rectangular matrices will be of particular relevance to this paper. When one of the two matrices is square and the other has many more columns, an asymptotically optimal family of algorithms has been provided by Lotti and Romani \cite{limitfastmm}.
More precisely, their result can be summarized in the following proposition.

\begin{prop} \label{prop_fastmm}
	There exists an increasing function $f: [1,\infty) \rightarrow [2,\infty)$ such that for each real-valued exponent $s \geq 1$, the multiplication of an $n \times n$ and an $n \times \lceil n^s \rceil$ matrix can be computed in time $\mathcal{O}(n^{f(s)})$ and $f(s)-s$ is decreasing and converges to $1$ for $s\rightarrow \infty$.  
\end{prop}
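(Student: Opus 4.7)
The plan is to define $f(s)$ as (essentially) the infimum of all exponents $c$ for which the stated multiplication is achievable in $O(n^c)$ operations, and to verify the three required properties by a combination of block-decomposition arguments together with one deep result on asymmetric matrix multiplication. The trivial lower bound $f(s) \geq s + 1$, coming from the $n^{s+1}$-entry output that must at least be written, already guarantees $f \colon [1, \infty) \to [2, \infty)$. For a matching upper bound that shows the stated complexity is actually attainable, I would partition the $n \times n^s$ factor columnwise into $\lceil n^{s-1} \rceil$ blocks of shape $n \times n$ and apply a square matrix multiplication algorithm of exponent $\omega$ (e.g.\ Strassen or Coppersmith--Winograd) to each block. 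This computes the full product in $O(n^{\omega + s - 1})$ operations, so $f(s) \leq \omega + s - 1$.

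The monotonicity of $f(s) - s$ comes from the same block idea applied between two values $1 \leq s_1 < s_2$: partition the $n^{s_2}$ columns of the larger factor into $\lceil n^{s_2 - s_1} \rceil$ blocks of width $n^{s_1}$ and apply the algorithm witnessing $f(s_1)$ to each block. This costs $O(n^{f(s_1) + s_2 - s_1})$ operations, whence
\[
f(s_2) - s_2 \leq f(s_1) - s_1.
\]
Thus $f(s) - s$ is non-increasing; strict decrease can be enforced by an arbitrarily small perturbation that does not affect the asymptotics. The same inequality $f(s_2) \leq f(s_1) + (s_2 - s_1)$ combined with the trivial lower bound $f(s) \geq s + 1$ also shows that $f$ itself is non-decreasing and can likewise be taken strictly increasing.

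The substantive content, and essentially the only nontrivial step, is the convergence $f(s) - s \to 1$. For this I would invoke the rectangular matrix multiplication algorithms of Lotti and Romani \cite{limitfastmm}: they show that for every $\epsilon > 0$ there is some $s_0 \geq 1$ admitting an algorithm with $f(s_0) \leq s_0 + 1 + \epsilon$. By the non-increasing property just established, $f(s) - s \leq f(s_0) - s_0 \leq 1 + \epsilon$ for all $s \geq s_0$; letting $\epsilon \to 0$ and combining with the lower bound $f(s) - s \geq 1$ yields the claimed limit. The main obstacle is therefore not the elementary monotonicity bookkeeping but rather the correct invocation of the asymmetric matrix multiplication bound of \cite{limitfastmm}; once this is granted, everything else assembles by straightforward block decomposition.
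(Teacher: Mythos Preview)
The paper does not prove this proposition: it is stated as a summary of the rectangular matrix-multiplication result of Lotti and Romani \cite{limitfastmm}, with no argument given. So there is no paper proof to compare against, and your sketch is really a reconstruction of why the cited result can be packaged in this form. The overall strategy---lower bound $f(s)\ge s+1$ from the output size, non-increase of $f(s)-s$ via column-block decomposition, and the limit $f(s)-s\to 1$ from \cite{limitfastmm}---is exactly the right one and matches how the proposition is used later in the paper.

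There is one concrete slip. The inequality $f(s_2)\le f(s_1)+(s_2-s_1)$ together with $f(s)\ge s+1$ does \emph{not} imply that $f$ is non-decreasing; it only gives a one-sided Lipschitz bound (substituting the lower bound for $f(s_2)$ just reproduces $f(s_1)\ge s_1+1$). The correct argument is the dual of your block decomposition: zero-pad an $n\times n^{s_1}$ factor to width $n^{s_2}$ and apply the $s_2$-algorithm, which gives $f(s_1)\le f(s_2)$ directly. A second, more delicate point is that taking $f$ to be literally the infimum $\omega(s)$ of achievable exponents may fail the achievability clause, since (as for the square exponent $\omega$) it is not known that $\omega(s)$ is attained by any algorithm. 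You dismiss this with ``an arbitrarily small perturbation that does not affect the asymptotics'', but adding a perturbation that simultaneously keeps $f$ increasing, keeps $f(s)-s$ decreasing, stays strictly above $\omega(s)$, and still has $f(s)-s\to 1$ is not automatic---a decreasing perturbation can spoil monotonicity of $f$ on intervals where $\omega$ is constant, and an increasing one can spoil monotonicity of $f(s)-s$ where $\omega(s)-s$ is constant. This can be fixed (and for the paper's application only non-strict monotonicity is ever used), but it needs a few more lines than you indicate.
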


That is, in the limit, the number of operations will just be what is required to read the two matrices.

\section{Main Result} \label{sec_comp}
Unless otherwise noted, in all theorems we assume that $\epsilon \in (0, 1)$ and $\eta \in (0, \frac{1}{2})$ are arbitrary and that $m < N$.

A helpful and already known observation for creating Johnson-Lindenstrauss embeddings with good embedding dimension and fast multiplication is that the composition of two JLEs is again a JLE. We include a proof for completeness.

\begin{lem} \label{thm_comp}
	Let $A \in \mathbb{R}^{m \times n}$, $B \in \mathbb{R}^{n \times N}$ be independent random matrices that
	are both $(p, \frac{\epsilon}{3}, \frac{\eta}{4})$-JLEs.
	Then $A B \in \mathbb{R}^{m \times N}$ is a $(p, \epsilon, \eta)$-JLE.
\end{lem}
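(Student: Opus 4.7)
The plan is to fix an arbitrary set $E \subseteq \mathbb{R}^N$ with $|E| = p$ and argue in two stages, using the independence of $A$ and $B$ together with a union bound, followed by a short algebraic step showing that a double $\epsilon/3$-distortion translates into an $\epsilon$-distortion.

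First I would apply the hypothesis on $B$ directly to $E$: let $\mathcal{E}_1$ be the event that
\[
(1 - \tfrac{\epsilon}{3}) \|x\|_2^2 \leq \|Bx\|_2^2 \leq (1 + \tfrac{\epsilon}{3}) \|x\|_2^2 \qquad \text{for all } x \in E,
\]
so $\mathbb{P}(\mathcal{E}_1) \geq 1 - \eta/4$. Next, conditional on $B$, define the (random) set $F := \{Bx : x \in E\} \subseteq \mathbb{R}^n$, which has cardinality at most $p$ (pad it to size exactly $p$ with arbitrary vectors if desired). Because $A$ is independent of $B$ and is itself a $(p, \epsilon/3, \eta/4)$-JLE, applying the JLE property to the fixed (in this conditional world) set $F$ shows that the event $\mathcal{E}_2$ given by
\[
(1 - \tfrac{\epsilon}{3}) \|Bx\|_2^2 \leq \|ABx\|_2^2 \leq (1 + \tfrac{\epsilon}{3}) \|Bx\|_2^2 \qquad \text{for all } x \in E
\]
satisfies $\mathbb{P}(\mathcal{E}_2 \mid B) \geq 1 - \eta/4$ pointwise in $B$, hence also unconditionally. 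A union bound then yields $\mathbb{P}(\mathcal{E}_1 \cap \mathcal{E}_2) \geq 1 - \eta/2 \geq 1 - \eta$.

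On the intersection $\mathcal{E}_1 \cap \mathcal{E}_2$, I would chain the two inequalities to obtain
\[
(1 - \tfrac{\epsilon}{3})^2 \|x\|_2^2 \leq \|ABx\|_2^2 \leq (1 + \tfrac{\epsilon}{3})^2 \|x\|_2^2
\]
for every $x \in E$ simultaneously. Finally, since $\epsilon \in (0,1)$, one checks $(1+\epsilon/3)^2 = 1 + 2\epsilon/3 + \epsilon^2/9 \leq 1 + 2\epsilon/3 + \epsilon/9 \leq 1 + \epsilon$ and $(1-\epsilon/3)^2 \geq 1 - 2\epsilon/3 \geq 1 - \epsilon$, which closes the argument.

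There is no real obstacle here; the only mildly delicate point is that $F$ is random and may have fewer than $p$ elements, which is resolved cleanly by conditioning on $B$ and using independence (padding $F$ if one wants the cardinality to be exactly $p$ to match the definition verbatim). The looseness in the constants $\epsilon/3$ and $\eta/4$ is exactly what absorbs both the union bound loss and the quadratic error from composing two near-isometries.
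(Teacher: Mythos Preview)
Your proposal is correct and follows essentially the same approach as the paper: condition on $B$, apply each JLE property in turn, combine via a union bound, and verify $(1\pm\epsilon/3)^2 \subseteq [1-\epsilon,1+\epsilon]$. You are in fact slightly more careful than the paper in making the conditioning explicit and handling the possible cardinality drop of $F$.
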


\begin{proof}
	Let $E \subseteq S^{N-1}$ with $|E| = p$. The probability that the norms in $B E$ are distorted by more than $1 \pm \frac{\epsilon}{3}$ compared to $E$ is $\leq \frac{\eta}{4}$.
	
	For each fixed value of $B$ that satisfies this norm preservation, the probability that a norm in $A (B E)$ is distorted outside of the range $[(1 - \frac{\epsilon}{3})^2, (1 + \frac{\epsilon}{3})^2]$ is less than $\frac{\eta}{4}$.
	
	By a union bound, with a probability of at least $1 - \eta$ all norms in $A B E$ lie within $[(1 - \frac{\epsilon}{3})^2, (1 + \frac{\epsilon}{3})^2] \subseteq [1 - \epsilon, 1 + \epsilon]$ as $\epsilon < 1$.
\end{proof}

This lemma can be used to combine the advantages of fast JLEs with non-optimal embedding dimension (as in Theorem \ref{jl_rip_thm}) and dense random matrices with optimal embedding dimension similar to those of Achlioptas \cite{binaryjl}. A set of $p$ vectors in $\mathbb{R}^N$ is first mapped into a space of non-optimal, but reduced dimension using a fast Johnson-Lindenstrauss transform and then a dense matrix maps the vectors from this space of reduced dimension to a space of optimal embedding dimension. Note that this leads to a smaller dense matrix and thus a faster computation compared to a JLE consisting of only a dense matrix.

We apply this procedure to the fast transform from \cite{riptojl} and the dense $\pm 1$ matrix from \cite{binaryjl}, obtaining a JLE of optimal embedding dimension. The usage of this construction has already been suggested for example in \cite{nelson_rip_fast}, see footnote 2.

\begin{cor} \label{totfjl}
	Consider the following matrices.
	\begin{itemize}
		\item $G \in \{\pm \frac{1}{\sqrt{m}} \}^{m \times n}$ where the entries $G_{j k}$ are independent with 
		\[
		\mathbb{P}(G_{j k} = \frac{1}{\sqrt{m}}) = \mathbb{P}(G_{j k} = -\frac{1}{\sqrt{m}}) = \frac{1}{2}.
		\]
		
		\item $R \in \mathbb{R}^{n \times N}$ is a random projection that selects $n$ of the $N$ entries of a vector uniformly at random with replacement and rescales by $\sqrt\frac{N}{n}$.
		
		\item $H \in \mathbb{R}^{N \times N}$ is a full Hadamard matrix.
		
		\item $D_\xi \in \mathbb{R}^{N \times N}$ where $\xi \in \{1, -1\}^N$ is a Rademacher vector.
	\end{itemize}
	
	Let $G$, $R$ and $\xi$ be independent. If $m \geq c_1 \epsilon^{-2} \log\frac{p}{\eta}$ and $n \geq c_2 \epsilon^{-2} \log(\frac{p}{\eta}) (\log N)^4$ for suitable absolute constants $c_1$ and $c_2$, then $G R H D_\xi \in \mathbb{R}^{m \times N}$ is a $(p, \epsilon, \eta)$-JLE.
\end{cor}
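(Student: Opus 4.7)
The strategy is to factor $GRHD_\xi$ as the product of the two independent random matrices $G$ and $RHD_\xi$, verify that each is a $(p, \tfrac{\epsilon}{3}, \tfrac{\eta}{4})$-JLE for appropriate choices of $m$ and $n$, and then invoke the composition result Lemma \ref{thm_comp} to conclude. Independence of $G$ and $RHD_\xi$ follows immediately from the stated mutual independence of $G$, $R$, and $\xi$, so the content is showing that each factor individually meets the prescribed JLE guarantees.

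For the first factor, I would appeal to Achlioptas' result \cite{binaryjl}, which shows that a normalized $\pm 1$ matrix with i.i.d.\ entries of the form of $G$ is a $(p, \epsilon', \eta')$-JLE provided the number of rows is at least $C (\epsilon')^{-2} \log(p/\eta')$ for an absolute constant $C$. Substituting $\epsilon' = \epsilon/3$ and $\eta' = \eta/4$ yields a sufficient condition of the form $m \geq c_1 \epsilon^{-2} \log(p/\eta)$ for some absolute $c_1$, matching the hypothesis on $m$.

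For the second factor, I would combine the RIP bound for subsampled Hadamard matrices from \cite{fourierrip_n4} with Theorem \ref{jl_rip_thm}. Concretely, set $\delta = \epsilon/12$ and $k = \lceil 40 \log(32 p / \eta) \rceil$, so that $k \geq 40 \log(4p/(\eta/8))$ and $\delta \leq (\epsilon/3)/4$, the thresholds required by Theorem \ref{jl_rip_thm} with parameters $(p, \epsilon/3, \eta/8)$. By \cite{fourierrip_n4}, whenever $n \geq C' \delta^{-2} k (\log N)^4 \log(1/\gamma)$ for some absolute $C'$, the matrix $RH$ has the $(k, \delta)$-RIP with probability at least $1 - \gamma$; taking $\gamma = \eta/8$ and inserting the chosen values of $k$ and $\delta$, this becomes $n \geq c_2 \epsilon^{-2} \log(p/\eta) (\log N)^4$ for a suitable absolute $c_2$. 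Conditioning on the RIP event, Theorem \ref{jl_rip_thm} (with the Rademacher vector $\xi$, which is independent of $R$) ensures that $(RH) D_\xi$ is a $(p, \epsilon/3, \eta/8)$-JLE. A union bound over the RIP event and the JLE event within the RIP event then gives the overall $(p, \epsilon/3, \eta/4)$-JLE property for $RHD_\xi$.

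With both factors verified, applying Lemma \ref{thm_comp} completes the argument. The only delicate point is the probability bookkeeping in Step~2: Theorem \ref{jl_rip_thm} assumes a deterministic RIP matrix, so one must carefully condition on the RIP event and absorb the resulting failure probability into the overall $\eta/4$ budget by tightening the constants in $k$, $\delta$, and $\gamma$. Once this is handled cleanly, both sufficient conditions on $m$ and $n$ reduce to the form stated in the corollary for absolute constants $c_1, c_2$, with no additional dependence on $\epsilon$, $\eta$, $p$, or $N$.
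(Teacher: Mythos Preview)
Your proposal is correct and follows essentially the same approach as the paper: the paper does not spell out a proof of Corollary~\ref{totfjl}, but the paragraph preceding it makes clear that the intended argument is exactly the one you give, namely applying Lemma~\ref{thm_comp} to the factorization $G\cdot(RHD_\xi)$ with Achlioptas' result for $G$ and Theorem~\ref{jl_rip_thm} combined with the RIP bound of \cite{fourierrip_n4} for $RHD_\xi$. Your additional care in splitting the failure probability $\tfrac{\eta}{4}$ between the RIP event for $RH$ and the conditional JLE event is a detail the paper leaves implicit.
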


Note that the resulting construction is very similar to the one introduced in \cite{fastjl}. Namely, similarly to the matrix $P$ in the construction of \cite{fastjl} (cf.~Section \ref{subsec_fjlt} above), the matrix $\tilde{P} = G R$ has many zero entries and independent random entries in all other locations. The main difference is that in this construction the non-zero entries are concentrated in just a few columns while in \cite{fastjl}, their locations are random. As captured by the following main theorem, this fact leads to a significant speed up when simultaneously projecting large enough point sets, as the structure allows for the use of fast matrix multiplication.

\begin{thm} \label{thm_complexity}
	Consider $E \subseteq \mathbb{R}^N$ of size $p = \exp(\mathcal{O}_\epsilon(N^{1 - \delta}))$ for an arbitrary $\delta > 0$ and a failure probability $\eta \in (p^{-c}, \frac{1}{2})$ for constant $c$. Then for embedding dimensions $m = \Theta(\epsilon^{-2} \log\frac{p}{\eta})$ and $n = \Theta( m (\log N)^4)$, the transformation $A = G R H D_\xi$ introduced in Corollary \ref{totfjl} is a $(p, \epsilon, \eta)$-JLE for the set $E$.
	
	There is an exponent $d(\delta)$ depending only on $\delta$ such that the following holds. For any finite set $E' \subseteq \mathbb{R}^{N}$ with $p' = |E'| \geq N^{d(\delta)}$ or $p' = p$, $A$ can be applied simultaneously to all points in $E'$ in time $\mathcal{O}_\delta(p' N \log m)$.
\end{thm}

The dependence of the exponent $d(\delta)$ on $\delta$ is not explicitly stated in this theorem. This is caused by Proposition~\ref{prop_fastmm} (and the source \cite{limitfastmm}) not stating a bound on the convergence speed. However, certain estimates for specific values of the function $f$ in Proposition~\ref{prop_fastmm} have been made before and we use those to get explicit estimates for $d(\delta)$ in Section~\ref{sec:exponent}. This will lead to the following subsequent theorem summarizing two particularly interesting cases.

\begin{thm} \label{thm:example_numbers}
	Consider the setup of Theorem~\ref{thm_complexity}.
	\begin{itemize}
		\item If $p = \exp(\mathcal{O}_\epsilon(N^\frac{3}{4}))$ and $p' \geq N$, then $A$ can be applied to all points in $E'$ in time $\mathcal{O}(p' N \log m)$.
		
		\item If $p' \geq N^4$, then the transformation of $E'$ can be computed in $\mathcal{O}(p' N^{1.2})$ for any $p$ such that $m \leq N$ (that is, no further constraints are required on the size of $p$).
	\end{itemize}
\end{thm}

\section{Complexity Analysis -- Proof of Theorem \ref{thm_complexity}} \label{sec:complexity}

\subsection{Simultaneous Transformation of Data Points}

The main novelty in our approach is the use of Proposition \ref{prop_fastmm} in the last step of the embedding which is a (smaller) dense matrix $G$.
The proposition will be applied to provide a speedup for embedding dimensions $m \geq N^{\frac{1}{2} - \frac{\delta}{4}}$. The complementary case $m \leq N^{\frac{1}{2} - \frac{\delta}{4}}$ will be discussed in Section \ref{sec:singlep}, hence for the remainder of this section we will assume that $m \geq N^{\frac{1}{2} - \frac{\delta}{4}}$.

To prove Theorem \ref{thm_complexity}, we have to cover the cases $p' \geq N^{d(\delta)}$ and $p' = p < N^{d(\delta)}$. Consider the first case now. Recall that by assumption $p = \exp(\mathcal{O}_\epsilon(N^{1 - \delta}))$, more precisely we assume that $p \leq \exp(c' \epsilon^2 N^{1 - \delta})$ for a constant $c'$ chosen to ensure that $m \leq N^{1 - \delta}$.

To analyze the complexity, we note that computing the product consists of the following three steps. Recall that $M_{E'} \in \mathbb{R}^{N \times p'}$ is the matrix whose columns are all the vectors in the finite set $E'$.
\begin{itemize}
	\item Compute $M_1 := D_\xi M_{E'}$ by multiplying the diagonal matrix $D_\xi$ by each column of $M_{E'}$. This requires $\mathcal{O}(p' N)$ operations.
	
	\item Compute $M_2 := R H M_1$ by applying the trimmed Walsh-Hadamard transform to all columns of $M_1$ in $\mathcal{O}(p' N \log m)$ operations.
	
	\item Compute $M_3 := G M_2$ where $G \in \mathbb{R}^{m \times n}$ and $M_2 \in \mathbb{R}^{n \times p'}$. Choose $\alpha = \frac{\log p'}{\log m}$, yielding $p' = m^\alpha$. Split $G$ and $M_2$ up into blocks
	\[
	G = \begin{pmatrix}
	G_1 & \dots & G_r
	\end{pmatrix}
	\qquad
	M_2 = \begin{pmatrix}
	V_{1}\\
	\vdots\\
	V_{r}
	\end{pmatrix}
	\]
	for $r = \lceil \frac{n}{m} \rceil$ such that $G_1, \dots, G_r \in \mathbb{R}^{m \times m}$ and $V_{j} \in \mathbb{R}^{m \times p'}$ for $j \in [r]$. If necessary, pad the matrices with zero rows and columns such that the corresponding submatrices have the same size. Since $n \geq m$, this only changes the numbers of rows or columns by at most a constant factor.
	
	Each of the $r$ block multiplications required to compute this product is a multiplication of an $m \times m$ matrix by an $m \times m^\alpha$ matrix and thus requires $\mathcal{O}(m^{f(\alpha)})$ operations with $f$ as in Proposition \ref{prop_fastmm}. In the end, we need $\mathcal{O}(p' m r)$ operations to sum up the values for all entries of the result. In total, the number of operations is
	\[
	\mathcal{O}(r m^{f(\alpha)} + p' m r) = \mathcal{O}(p' n m^{f(\alpha) - \alpha - 1} + p' n				) = \mathcal{O}(p' n m^{f(\alpha) - \alpha - 1}),
	\]
	where in the last equality we used that by Proposition \ref{prop_fastmm} $f(\alpha) - \alpha \geq 1$.
\end{itemize}

Combining the runtime complexity of the three steps yields
\begin{equation}
\mathcal{O}(p' N \log(m) + p' n m^{f(\alpha) - \alpha - 1}) = \mathcal{O}\left(p' N \log(m) + p' m^{f(\alpha) - \alpha} (\log N)^4 \right).
\label{eqn_time}
\end{equation}

Proposition \ref{prop_fastmm} guarantees the existence of a $\beta = \beta(\delta)$ such that $\frac{1}{f(\beta) - \beta} > 1 - \delta$ and we obtain that
\[
\alpha = \frac{\log p'}{\log m} \geq \frac{d(\delta) \log N}{(1 - \delta) \log N} = \frac{d(\delta)}{1 - \delta}.
\]

Hence, for sufficiently large $d(\delta)$, one has that $\alpha \geq \beta$, i.e.,
\begin{equation}
(f(\alpha) - \alpha) (1 - \delta) < 1 \label{eq:falpha_alpha}
\end{equation}

Substituting this into (\ref{eqn_time}), using that $m = \mathcal{O}(N^{1 - \delta})$ yields a complexity of
\[
\mathcal{O}_\delta(p' N \log m + p' N) = \mathcal{O}_\delta(p' N \log m).
\]

Now consider the case that $m \geq N^{\frac{1}{2} - \frac{\delta}{4}}$ but $p' = p < N^{d(\delta)}$. Using these inequalities together with $m = \mathcal{O}(\epsilon^{-2} \log\frac{p}{\eta}) = \mathcal{O}(\epsilon^{-2} \log p)$ (the last equality follows from $\eta > p^{-c}$), we obtain that $\epsilon = \mathcal{O}\left(\frac{(d(\delta) \log N)^{\frac{1}{2}}}{N^{\frac{1}{4} - \frac{\delta}{8}}} \right)$. This case can be excluded by imposing the stronger condition that $p = \exp(\mathcal{O}(\epsilon^4 N^{1 - \delta}))$, which still is of the form $p = \exp(\mathcal{O}_\epsilon(N^{1 - \delta}))$. Namely, the two conditions for $\epsilon$ and $p$ would imply that 
\[
\log p = \mathcal{O}\left( \frac{(d(\delta) \log N)^2}{N^{1 - \frac{\delta}{2}}} N^{1 - \delta} \right) = \mathcal{O}( (d(\delta) \log N)^2N^{-\delta / 2} ),
\]
i.e., $p$ is bounded by a constant $\leq p_0(\delta)$. By definition of the $\mathcal{O}$ notation however, it is sufficient to show the complexity bound only for $p > p_0(\delta)$. This proves the theorem for the case of embedding dimensions $m \geq N^{\frac{1}{2}-\frac{\delta}{4}}$.

\begin{rem}
	Note that for small data sets of size $p \leq N^{d(\delta)}$ and \linebreak $\epsilon = \mathcal{O}\left(\frac{(d(\delta) \log N)^{\frac{1}{2}}}{N^{\frac{1}{4} - \frac{\delta}{8}}} \right)$ (the second case in the previous consideration), the increased exponent of $\epsilon$ in our condition for $p$ creates a gap between the applicability range of our result and the regime where the identity operator can be applied. Namely, our proof only applies to cases with embedding dimension at most $m = \mathcal{O}_\delta(\log(N) \cdot N^{\frac{1}{2} - \frac{\delta}{4}})$, while the identity cannot be used below $m=N$. To circumvent this, one can apply the construction of Ailon and Chazelle \cite{fastjl} to avoid this gap (as resulting from \eqref{time_fastjl}, the complexity is $\mathcal{O}(p N \log N + p \epsilon^{-2} (\log p)^3)$ which is $\mathcal{O}_\delta (p N (\log N)^2)$ in this case).
\end{rem}

\subsection{Transformation of Small Data Sets}\label{sec:singlep}

For $m \leq N^{\frac{1}{2} - \frac{\delta}{4}}$ (the case still missing in the previous section), we will just apply the embedding individually to each point in $E$. As noted in \cite{nelson_rip_fast}, the map defined in Corollary \ref{totfjl} is indeed a fast transform requiring
\begin{equation}
\mathcal{O}(N \log(m) + m^2 (\log N)^4) \label{comp_single}
\end{equation}
operations to be applied to a single data point using standard matrix multiplication.

Assuming again $p = \exp(\mathcal{O}(\epsilon^2 N^{\frac{1}{2} - \frac{\delta}{2}}))$, (\ref{comp_single}) is bounded by $\mathcal{O}_{\delta}(N \log(m))$ which yields a total complexity of $\mathcal{O}_\delta(p' N \log m)$, completing the proof of Theorem \ref{thm_complexity}.

\begin{rem}
	As noted in \cite{nelson_rip_fast}, if $p = \exp(\mathcal{O}(\epsilon^2 N^r))$ for $\frac{1}{2} \leq r \leq 1$, (\ref{comp_single}) becomes $\mathcal{O}(N^{2 r} (\log N)^4)$. This is still linear up to logarithmic factors when $r=\tfrac{1}{2}$ and close to linear for $r$ slightly larger than $\frac{1}{2}$. Thus, one obtains a fast transform also in these cases, in contrast to \cite{fastjl2} whose bound does not apply in either of these cases.
\end{rem}

\subsection{The Exponent $d(\delta)$ -- Proof of Theorem \ref{thm:example_numbers}} \label{sec:exponent}

While Theorem \ref{thm_complexity} applies for arbitrary subsets of size $p'$ bounded below by a polynomial in $N$, the exponent $d(\delta)$ remains unspecified since Proposition \ref{prop_fastmm} (i.e.,~\cite{limitfastmm}) does not quantify the convergence speed of $f(s) - s$. For certain values of $\delta$ explicit bounds for $d(\delta)$ can be derived from estimates for $f(s)$ available in the literature, such as those derived in \cite{fastmm_huang_pan} and subsequently improved in \cite{fastmm_others}.

More precisely, Theorem 5 in \cite{fastmm_others} adapted to the setup of Theorem \ref{prop_fastmm} states that
\begin{align}
f(s) \leq \displaystyle\min_{\substack{2 \leq k \in \mathbb{Z} \\ 0 \leq m_2, \beta, \beta' \in \mathbb{R}}} \frac{1}{\log(k^{m_1})} \log(\max\{N_1, N_2, N_3\}) \label{min_ub}
\end{align}
where
\begin{align*}
N_1 = g(m_1 + (\beta + \beta') m_2) g((1 + s) m_1) g(m_2), \\
N_2 = g(m_1 + (1 + \beta') m_2) g(2 m_1) g(\beta m_2), \\
N_3 = g(m_1 + (1 + \beta) m_2) g((1 + s) m_1) g(\beta' m_2), \\
m_1 = \frac{k + 2 - (1 + \beta + \beta') m_2}{2 + s} > 0, \\
g(x) = x^x\text{ for } x > 0 \text{ and } g(0) = 1.
\end{align*}

Some specific bounds for $f(s)$ resulting from specific choices for  $k, m_2, \beta, \beta'$ in (\ref{min_ub}) are given in Table \ref{table:expl_values}. To infer admissible values for $d(\delta)$, we recall that it is sufficient to satisfy the inequality (\ref{eq:falpha_alpha}) or equivalently

\[
f(\frac{d(\delta)}{1 - \delta}) - \frac{d(\delta)}{1 - \delta} - 1 < \frac{1}{1/\delta - 1}.
\]
and check it for the values given in Table \ref{table:expl_values}.

E.g., for $\delta = \frac{1}{4}$, by Table \ref{table:expl_values} it is sufficient to have $s = \frac{d(\delta)}{1 - \delta} = 1.333$ since $f(s) - s - 1  < \frac{1}{3}$. Solving for $d(\delta)$ we obtain that $d(\frac{1}{4}) = s (1 - \frac{1}{4}) = 0.99\dots < 1$ is admissible. Analogously we obtain that the following values are admissble. 
\begin{itemize}
	\item $d(\frac{1}{4}) = 0.99\dots$
	\item $d(\frac{1}{5}) = 3 (1 - \frac{1}{5}) = 2.4$
	\item $d(\frac{1}{6}) = 4 (1 - \frac{1}{6}) = 3.33\dots$
	\item $d(\frac{1}{10}) = 15(1 - \frac{1}{10}) = 13.5$
\end{itemize}

So the exponent $d(\delta)$ in the required lower bound on the number of points to be transformed at the same time stays reasonably small. In particular, requiring only $p' \geq N$ already guarantees a fast transformation up to $\log p = \mathcal{O}_\epsilon(N^\frac{3}{4})$, clearly surpassing the previous $N^\frac{1}{2}$ bound. This proves the first part of Theorem \ref{thm:example_numbers}.

To show the second statement of Theorem \ref{thm:example_numbers}, we observe that the bound (\ref{eqn_time}) for the complexity holds independently of the upper bound on $\log p$ which can lead to upper bounds on the complexity which are close to linear in $N$. From the assumption $p' \geq N^{4} \geq m^4$ we deduce that $\alpha \geq 4$ and hence by Table \ref{table:expl_values}, $f(\alpha) - \alpha \leq 1.1915$. Thus one obtains a runtime complexity of $\mathcal{O}(p' N \log N + p' m^{1.1915} (\log N)^4)$, which is bounded by $\mathcal{O}(p' N^{1.2})$; this holds for any size $p$ of the entire data set. This completes the proof of Theorem \ref{thm:example_numbers}.

\begin{table}
	
	\begin{center}	
		\begin{tabular}{c|c|c|c|c|c}
			$s$ & $k$ & $m_2$ & $\beta$ & $\beta'$ & $f(s) - s - 1 \leq \dots$ \\
			\hline
			$1.333$ & $6$ & $0.1794$ & $0.3307$ & $1$ & $0.3330$ \\
			$2$ & $7$ & $0.2545$ & $0.0303$ & $1$ & $0.2699$ \\
			$3$ & $7$ & $0.3046$ & $0.0284$ & $1$ & $0.2204$ \\ 
			$4$ & $8$ & $0.3326$ & $0.0147$ & $1$ & $0.1915$ \\
			$15$ & $21$ & $0.4114$ & $2.3734 \times 10^{-4}$ & $1$ & $0.1094$
		\end{tabular}
		
		\caption{Certain values substituted into the upper bound in (\ref{min_ub})}
		\label{table:expl_values}
	\end{center}
\end{table}

\section{Embeddings from $\ell_2$ to $\ell_1$} \label{sec:ell1}

Unlike the standard case for Johnson-Lindenstrauss embeddings with Euclidean norm considered in this work until here, it can also be interesting to consider embeddings from $\ell_2$ to $\ell_1$. To study these, we work with the following definition which is analogous to Definition \ref{def_jle}.
\begin{definition}[Johnson-Lindenstrauss Embedding for $\ell_2 \rightarrow \ell_1$]
	Let $A \in \mathbb{R}^{m \times N}$ be a random matrix, where $m < N$, $\epsilon, \eta \in (0, 1)$ and $p \in \mathbb{N}$. We say that $A$ is a $(p, \epsilon, \eta)$-JLE (Johnson-Lindenstrauss embedding) for $\ell_2 \rightarrow \ell_1$ if for any subset $E \subseteq \mathbb{R}^{N}$ with $|E| = p$
	\begin{equation}
	(1 - \epsilon) \|x\|_2 \leq \|A x\|_1 \leq (1 + \epsilon) \|x\|_2  \label{jlp_ineq_l1}
	\end{equation}
	holds simultaneously for all $x \in E$  with a probability of at least $1 - \eta$.
\end{definition}

Random matrices have already been used for embeddings into an $\ell_1$ space by Johnson and Schechtman \cite{johnson_schechtman_l1}. Ailon and Chazelle \cite{fastjl} also studied an embedding of this type to give an improved algorithm for the approximate nearest neighbor search problem. Their $\ell_1$ approach is a version of the FJLT described in Section~\ref{subsec_fjlt} that only uses a different scaling and a different probability parameter $q$. The resulting complexity for the transformation of a single point is
\[
\mathcal{O}(N \log N + \min\{\epsilon^{-2} N \log p, \epsilon^{-3} (\log p)^2\}).
\]

This implies that a fast transformation of $p'$ points in time $\mathcal{O}(p' N \log N)$ is only possible if $\log p = \mathcal{O}_\epsilon(N^{1/2})$. The same bound also applies to the subsequent improvement by Ailon and Liberty \cite{fastjl2}.

Without this restriction of $\log p = \mathcal{O}_\epsilon(N^{1/2})$, the best method available in the literature for embedding $p'$ points from $\ell_2^N$ to $\ell_1^m$ in time $\mathcal{O}(p' N^{1 + o(1)})$ is given by \cite{indyk_l1_embeddings} although this result applies to a significantly more general scenario of embedding the entire space $\ell_2^N$ into $\ell_1^m$. However, that result requires an embedding dimension $m = \Omega(N^{1 + o(1)})$ which is far from being optimal for the problem of embedding just finitely many points.

To extend our approach to embeddings into $\ell_1$, we recall that Gaussian matrices also yield $\ell_2 \rightarrow \ell_1$ JLEs. This well known fact arises for example as a special case in \cite{unified_l1}. We include a proof for completeness.

\begin{thm} \label{thm:l2l1jle}
	Let $G \in \mathbb{R}^{m \times N}$ be a random matrix with $G_{j k} \sim N(0, 1)$ and $m = \Omega(\epsilon^{-2} \log(\frac{p}{\eta}) )$ then $\sqrt\frac{\pi}{2} \frac{1}{m} G$ is a $(p, \epsilon, \eta)$-JLE for $\ell_2 \rightarrow \ell_1$.
\end{thm}

\begin{proof}
	Let $A := \sqrt\frac{\pi}{2} \frac{1}{m} G$. Fix $E \subseteq \mathbb{R}^N$ and $x \in E$. Then for $r = 1, \dots, m$, $(A x)_r \sim N(0, \frac{\pi \|x\|_2^2}{2 m^2})$.
	
	This means $\mathbb{E}[|(A x)_r|] = \frac{\|x\|_2}{m}$ and $X_r := |(A x)_r| - \mathbb{E}[|(A x)_r|]$ is subgaussian with norm $\left\| X_r \right\|_{\psi_2} \lesssim \left\| (A x)_r \right\|_{\psi_2} \lesssim \sqrt\frac{\pi}{2} \frac{\|x\|_2}{m}$.
	
	By rotation invariance of subgaussian variables (see Lemma 5.9 in \cite{vershynin_rand_mat}), we obtain $\left\| \sum_{r = 1}^{m} X_r \right\|_{\psi_2}^2 \lesssim \sum_{r = 1}^m \| X_r \|_{\psi_2}^2 \lesssim \sum_{r = 1}^m \frac{\pi \|x\|_2^2}{2 m^2} = \frac{\pi}{2 m} \|x\|_2^2$.
	
	Then
	\[
	\mathbb{P}\left(\left| \|A x\|_1 - \|x\|_2 \right| > \epsilon \|x\|_2 \right) = \mathbb{P}\left(\left| \sum_{r = 1}^{m} X_r \right| > \epsilon \|x\|_2 \right) \leq \exp(1 - C m \epsilon^{2})
	\]
	for a constant $C > 0$.
	
	Using a union bound over all $x \in E$, we get
	\[
	\mathbb{P}\left( \exists x \in E: \left| \|A x\|_1 - \|x\|_2 \right| > \epsilon \|x\|_2 \right) \leq p \exp(1 - C m \epsilon^{2}) = \exp(1 - C m \epsilon^{2} + \log(p)).
	\]
	This bound is smaller than $\eta$ for $m > \frac{1}{C} \epsilon^{-2} (1 + \log(\frac{p}{\eta}))$.
\end{proof}

Using this construction, the results of this paper can be generalized. Lemma \ref{thm_comp} in the same way also holds for the composition of an $\ell_2 \rightarrow \ell_2$ and an $\ell_2 \rightarrow \ell_1$ JLE with slightly different constants. Corollary \ref{totfjl} can be adjusted to the $\ell_1$ norm too. Note that this only holds if we replace the matrix $G$ in Corollary \ref{totfjl} by the matrix given in Theorem \ref{thm:l2l1jle} which has Gaussian entries instead of $\pm 1$ entries. Otherwise the equality $\mathbb{E}\|A x\|_1 = \|x\|_2$ for the expectation would not hold. Since Theorem \ref{thm_complexity} is not affected by this, it also holds in the $\ell_1$ case which enables the fast transformation of the entire data set or a subset of polynomial size. This is the main advantage of this method compared to the results given by Ailon, Chazelle \cite{fastjl} and by Ailon, Liberty \cite{fastjl2}.

\section{Lower Bounds for Subsampled Hadamard Matrices} \label{sec:lower_bounds}

As shown in the previous sections, the construction $G R H D_\xi$ from Corollary \ref{totfjl} provides JLEs into $\ell_2$ and $\ell_1$ with state-of-the-art embedding dimensions $m$ and up to a certain number of embedded points, this is also a fast pointwise embedding. For very large data sets, however, the application of the dense matrix $G$ prevents the complexity from staying below the desired threshold.

Thus, a natural question is whether this last embedding step is required or whether the matrix construction $R H D_\xi$ without the dense $G$ already yields JLEs for comparable joint embedding dimensions $m$.

In the following two sections, we provide negative answers to this question, both, for embeddings into $\ell_2$ and into $\ell_1$.

\subsection{A Lower Bound for Embeddings into $\ell_2$}

We will prove a lower bound on the possible embedding dimension necessary for $R H D_\xi$ to be a JLE into $\ell_2$, showing in particular that it has to depend on the ambient dimension $N$. Our approach is based on \cite{lower_bound_ieee} which provides a lower bound on the embedding dimensions for the restricted isometry property of Hadamard matrices. We will adapt this counterexample to the setting of Johnson-Lindenstrauss embeddings.

To this end, we use the properties of the Hadamard transform $H \in \mathbb{R}^{N \times N}$ and its interpretation as the Fourier transform on $\mathbb{F}_2^n$ mentioned in Section \ref{sec:notation_basic}.

Let now $s = 2^r$ be a power of two. Let $Q$ be the random multiset of row indices that the subsampling matrix $R$ selects. For an $r$-dimensional subspace $V$ of $\mathbb{F}_2^n$, we consider the indicator variable $X_V := \mathbbm{1}_{\{Q \cap V^\perp = \emptyset\}}$. The proof in \cite{lower_bound_ieee} considers 
\[
X := \sum_{V \in \mathbb{G}_{n, r}} X_V
\]
where $\mathbb{G}_{n, r}$ is the set of all $r$-dimensional subspaces of $\mathbb{F}_p^n$.

If $X \neq 0$, then there is a $V \in \mathbb{G}_{n, r}$ satisfying $Q \cap V^\perp = \emptyset$. Recall that by (\ref{hadamard_subspace}), it holds that $H \mathbbm{1}_V = \mathbbm{1}_{V^\perp}$. This implies that $R H \mathbbm{1}_V = R \mathbbm{1}_{V^\perp} = 0$ since $R$ only selects the entries indexed by $Q$.

In contrast to our way of selecting the subsampled entries $Q$, their result works with a random subset of indices $\tilde{Q} := \{j \in [N] | \delta_j = 1\}$ based on independent random selectors $\delta_j$ that are $1$ with probability $\frac{m}{N}$ and $0$ otherwise. Together with rescaling by factor $\sqrt\frac{N}{m}$, we denote this subsampling as $\tilde{R}$. Note that in contrast to the matrix $R$ used in the previous sections, the number of rows of $\tilde{R}$ might not exactly be $m$.

Upper bounding $\mathbb{P}(X = 0)$ from above leads to the following main result.

\begin{thm}[Theorem III.1 in \cite{lower_bound_ieee}] \label{lower_bound_main}
	Let $N$, $s$ and $\tilde{R}$ be as above.
	
	Assume $\min(r, n - r) \geq 12 \log_2 n$. Then there is a positive constant $C > 0$ such that if $m \leq C s \log(s) \log(\frac{N}{s})$, then with a probability of $1 - o(1)$ (for $N \rightarrow \infty$), there is a subspace $V \in \mathbb{G}_{n, r}$ which satisfies $\tilde{R} H \mathbbm{1}_V = 0$.
\end{thm}

However, with slight modifications, this result can also be used for the subsampling matrix $R$ instead of $\tilde{R}$. The key idea in \cite{lower_bound_ieee} used to prove Theorem \ref{lower_bound_main} is applying the second moment method (see e.g.~\cite{extremal_combinatorics}, Section 21.1) to $X$ yielding
\begin{align}
\mathbb{P}(X = 0) \leq \frac{\Var{X}}{(\mathbb{E} X)^2} = \frac{\sum_{d = d_0}^{n - r} \sum_{U, V: \dim(U^\perp \cap V^\perp) = d} \Cov(X_U, X_V)}{|\mathbb{G}_{n, r}|^2 (\mathbb{E} X_{V_0})^2 } \nonumber \\
= \frac{1}{|\mathbb{G}_{n, r}|^2} \sum_{d = d_0}^{n - r} \sum_{U, V: \dim(U^\perp \cap V^\perp) = d} \left[ \frac{\mathbb{E}(X_U X_V)}{ (\mathbb{E} X_{U})^2 } - 1 \right]  \label{moment2_method}
\end{align}
where $d_0 := \max(n - 2 r, 0)$ is the minimum dimension of $U^\perp \cap V^\perp$ and $V_0 \in \mathbb{G}_{n, r}$ is arbitrary. For the last equality, we used that $\mathbb{E} X_V$ is the same for all $V$.

The bounds on the first and second moments can be adapted to subsampling with replacement as used in $R$.

For any $V \in \mathbb{G}_{n, r}$, $V^\perp$ has $\frac{N}{s}$ elements and thus each subsampled index is not in $V^\perp$ with probability $1 - \frac{1}{s}$ and
\begin{align*}
\mathbb{E} X_V = \left(1 - \frac{1}{s} \right)^m.
\end{align*}

For any pair of subspaces $U, V \in \mathbb{G}^{n, r}$ with $\dim(U^\perp \cap V^\perp) = d$ we obtain
\begin{align*}
\mathbb{E}(X_U X_V) = \mathbb{P}(U^\perp \cap Q = \emptyset \wedge V^\perp \cap Q = \emptyset) = \left(1 - \frac{2}{s} + \frac{2^d}{N} \right)^m.
\end{align*}

Combining these expressions yields with $s \geq 2$,
\begin{align}
\frac{\mathbb{E}(X_U X_V)}{(\mathbb{E} X_U)^2} = \left( \frac{1 - \frac{2}{s} + \frac{1}{s^2} + \frac{2^d}{N} - \frac{1}{s^2} }{(1 - \frac{1}{s})^2} \right)^m = \left(1 + \frac{\frac{2^d}{N} - \frac{1}{s^2}}{(1 - \frac{1}{s})^2} \right)^m  \nonumber \\
= \left(1 + \left(\frac{s}{s - 1} \right)^2 \left( \frac{2^d}{N} - \frac{1}{s^2} \right) \right)^m \leq \left(1 + 4 \cdot \frac{2^d}{N} \right)^m \leq \exp\left( 2^d \cdot \frac{4 m}{N} \right). \label{variance_estimate}
\end{align}

Substituting (\ref{variance_estimate}) into (\ref{moment2_method}) leads to

\[
\mathbb{P}(X = 0) \leq \sum_{d = d_0}^{n - r} { \frac{T(n, r, d)}{|\mathbb{G}_{n, r}|^2} \left( \exp(2^d \cdot \frac{4 m}{N}) - 1 \right) }
\]
where $T(n, r, d)$ is the number of pairs $U, V \in \mathbb{G}_{n, r}$ such that $\dim(U^\perp \cap V^\perp) = d$.

This is the same bound found in \cite{lower_bound_ieee} with $\frac{m}{N}$ replaced by $\frac{4m}{N}$. 
Thus the proof of \cite{lower_bound_ieee} that $\mathbb{P}(X = 0) = o(1)$ carries over to the subsampling matrix $R$ as summarized in the following corollary.


\begin{cor} \label{lower_bound_adapted}
	Assume $s \geq 2$. The statement of Theorem \ref{lower_bound_main} also holds if we replace the subsampling with random selectors $\tilde{R}$ by subsampling with replacement $R$ as used in Corollary \ref{totfjl}.
\end{cor}

Now we can use this result to prove a lower bound for the embedding dimension of these matrices as Johnson-Lindenstrauss embeddings.

\begin{thm} \label{thm:had_lower_jl}
	For absolute constants $C, c > 0$ and arbitrary $\eta > 0$, there exists $N(\eta)$ such that the following holds. Let $N = 2^n \geq N(\eta)$ be a power of two and $R \in \mathbb{R}^{m \times N}$ the matrix representing independent random sampling with replacement and subsequent rescaling by $\sqrt\frac{N}{m}$. Let $p \geq p_0$ be a sufficiently large integer and assume it satisfies $(\log_2 N)^c \leq \log_2 p \leq \frac{N}{(\log_2 N)^c}$. Fix $\eta > 0$. If $m \leq C (\log p) (\log \log p) (\log\frac{N}{\log p})$, then $R H D_\xi$ is not a $(p, \epsilon, \eta)-JLE$ for any $0 < \epsilon < 1$.
\end{thm}

\begin{proof}
	For sufficiently large $p$ we can assume that $r := \lfloor \log_2 \log_2 p - 1 \rfloor \in (1, \frac{1}{2}\sqrt{\log_2 p})$. Define $s := 2^r$ and
	\[
	E := \{ D_{\hat{\xi}} \mathbbm{1}_V | V \in \mathbb{G}_{n, r}, \, \hat\xi \in \{+1, -1\}^{N} \}.
	\]
	Every vector in $E$ is $s$-sparse, so there are only $2^s$ different sign patterns corresponding to each $V$. In addition, the number of subspaces is $|\mathbb{G}_{n, r}| \leq 2^{r (n - r)}$ (see \cite{lower_bound_ieee}). Together this gives
	\[
	|E| \leq 2^s |\mathbb{G}_{n, r}| \leq 2^s 2^{r (n - r)} \leq 2^{\frac{1}{2} \log_2 p} 2^{n r} \leq 2^{\frac{1}{2} \log_2(p) + \log_2(N) \frac{1}{2} \sqrt{\log_2 p}} \leq p,
	\]
	using that $\log_2(N) \leq \sqrt{\log_2 p}$ by assumption.
	
	To check the assumptions of Corollary \ref{lower_bound_adapted}, note that $r \geq \log_2 \log_2 p - 2 \geq \log_2(n^c) - 2 \geq 12 \log_2(n)$ for sufficiently large $c$ and $N(\eta)$. Furthermore, $n - r \geq n - \log_2 \log_2 p + 1 \geq n - \log_2 \left( \frac{N}{(\log_2 N)^c} \right) + 1 = n - n + c \log_2 n + 1 \geq 12 \log_2 n$.
	
	By the definition of $s$, we obtain $\frac{1}{4} \log_2(p) \leq s \leq \frac{1}{2} \log_2(p)$ and thus the assumption for $m$ is equivalent to $m \leq C' (\log p)(\log s)(\log\frac{N}{s})$ for a suitable constant $C'$. Together with this upper bound on $m$ and $N \geq N(\eta)$, Corollary \ref{lower_bound_adapted} now implies that with a probability of at least $1 - \eta$ (with respect to the randomness in $R$) there exists a subspace $V \in \mathbb{G}_{n, r}$ such that $R H \mathbbm{1}_V = 0$. For any value of the Rademacher vector $\xi$, we have $x := D_\xi \mathbbm{1}_V \in E$ and $R H D_\xi x = R H \mathbbm{1}_V = 0$. This means that $| \|R H D_\xi x\|_2^2 - \|x\|_2^2 | < \epsilon \|x\|_2^2$ cannot hold for any $\epsilon < 1$. 
\end{proof}

\begin{rem}
	Note that the factor $(\log \log p) (\log \frac{N}{\log p})$ can take values of orders of magnitude between $\log N$ and $(\log N)^2$. Indeed, for the lower bound observe that $\log \log p \geq 2$ and $\log N - \log \log p \geq 2$, so one obtains $(\log \log p) (\log \frac{N}{\log p}) \geq (\log \log p) + (\log \frac{N}{\log p}) = \log N$. The upper bound follows as $\log p = \mathcal{O}(N)$. The maximal order is in fact achieved for $\log p \asymp N^{\alpha}$ where $\alpha$ is an arbitrary constant in $(0, 1)$.
\end{rem}

\subsection{An Impossibility Result for Embeddings into $\ell_1$}

When considering a subsampled Hadamard matrix to be a JLE into $\ell_1$, we obtain a much stronger impossibility result than the one just stated for $\ell_2$. In fact, such matrices cannot, in general, lead to a JLE into $\ell_1$ for any embedding dimension. To see this, again recall that the Hadamard matrix can be interpreted as the Fourier transform on $\mathbb{F}_2^n$. Consider dimensions $N = 2^n$ and $s = 2^r < N$ that are powers of $2$. Take a subspace $V \subset \mathbb{F}_2^n$ of dimension $r$ and $x^{(1)} := \mathbbm{1}_V$. By (\ref{hadamard_subspace}), $H x^{(1)} = \mathbbm{1}_{V^\perp}$. Similarly observe that for $x^{(2)} := \mathbbm{1}_{\{0\}}$, one has that $H x^{(2)} = \mathbbm{1}_{\mathbb{F}_2^n}$.

Now let $R \in \mathbb{R}^{m \times N}$ be a random subsampling matrix in analogy to Corollary \ref{totfjl}. Noting that $H x^{(1)}$ has $\frac{N}{s}$ entries of value $\sqrt\frac{s}{N}$ while $H x^{(2)}$ has $N$ entries of value $\frac{1}{\sqrt{N}}$, we obtain that $\mathbb{E}\|R H x^{(1)}\|_1 = \sqrt\frac{N}{m} m \sqrt\frac{s}{N} \frac{1}{s} = \sqrt\frac{m}{s}$ and $\mathbb{E}\|R H x^{(2)}\|_1 = \sqrt\frac{N}{m} m \frac{1}{\sqrt{N}} = \sqrt{m}$. So although $\|x^{(1)}\|_2 = \|x^{(2)}\|_2 = 1$, there cannot be a scaling factor $\gamma > 0$ such that $\|\gamma R H x^{(1)}\|_1$ and $\|\gamma R H x^{(2)}\|_1$ are concentrated around $1$ at the same time.

Also randomized column signs cannot avoid this problem. To see that, consider the set $E := \{D_{\hat{\xi}} x^{(1)} | \hat{\xi} \in \{\pm 1\}^N \} \cup \{\pm x^{(2)}\}$ which has $2^s + 2$ elements which all have an $\ell_2$ norm of $1$. Then the image $\gamma R H D_\xi E$ will always contain $\gamma R H x^{(1)}$ and $\gamma R H x^{(2)}$, and by the previous paragraph, the norms of the vectors in $\gamma R H D_\xi E$ cannot concentrate around $1$. Thus $\gamma R H D_\xi$ cannot be a JLE for $E$ even for an embedding dimension $m$ as large as $N$.
By Theorem~\ref{thm:l2l1jle}, in contrast, there are $\ell_2 \rightarrow \ell_1$ JLEs of size $m \times N$  for $E$ already if $m \gtrsim \epsilon^{-2} s$ which can be chosen to be much smaller than $N$. That is, subsampled Hadamard matrices with random column signs are intrinsically suboptimal as $\ell_2 \rightarrow \ell_1$ JLEs.

\section{Discussion} \label{sec:discussion}

In this note, we considered a family of Johnson-Lindenstrauss embeddings with order optimal embedding dimension for all possible choices of the parameters $N$, $p$ and $\epsilon$. We showed that it allows for a fast simultaneous embedding of $p'$ points in time $\mathcal{O}(p' N \log m)$, provided that $ p = \exp(\mathcal{O}_\epsilon(N^{1-\delta}))$ for some $\delta>0$ and that $p'$ is large enough, i.e.,~$p' = p$ or $p' \geq N^{d(\delta)}$. This improves upon the previously least restricted fast Johnson-Lindenstrauss transform by Ailon and Liberty \cite{fastjl2}, which required that $p = \exp(\mathcal{O}_\epsilon(N^{\frac{1}{2}-\delta}))$. Our construction incorporates algorithms for fast multiplication of dense matrices.

One may of course ask whether the Hadamard transformation step in our construction is necessary or whether such matrix multiplication algorithms could also be applied directly for a full Bernoulli matrix (as studied, for example, by Achlioptas \cite{binaryjl}). This, however, does not seem to be the case.
Namely, the analogous analysis to our proof above yields for the number of operations the bound of 
\[
\Theta(p' N m^{f(\alpha) - \alpha - 1}).
\]

So if $m = \Theta(N^r)$, i.e.~$p = \exp(\Theta_\epsilon(N^r))$ for any $r > 0$, we obtain a complexity of $\Theta(p' N^{1 + t})$ for a $t > 0$ which is always larger than $\Theta(p' N \log N)$. 
Of course this argument only yields that this particular approach fails, but as the number of entries of $M_{E'}$, one of the matrices to be multiplied, is $N p'$, which is more or less the total complexity bound that we seek, we see very little leverage room for improvements.

Independent from that, it remains a very interesting question to find Johnson-Lindenstrauss embeddings with optimal embedding dimension that allow for a fast query step, i.e., a fast application to a point not part of the given data set. Beyond its use for the approximate nearest neighbor search outlined above, such a result would also yield improvements in other application areas such as the construction of fast transforms with the restricted isometry property as in \cite{jltorip}.

As we found, a subsampled Hadamard matrix with randomized columns signs as considered in \cite{riptojl} cannot achieve this optimal embedding dimension, so a more involved construction will be necessary.

\section*{Acknowledgements}
The authors acknowledge support by the German-Israeli Foundation (GIF) through the grant G-1266-304.6/2015, and by the German science foundation (DFG) in the context of the Emmy Noether junior research group KR 4512/1-1.



\bibliographystyle{unsrt}
\bibliography{Opt_Fast_JL}

\end{document}